\newtheorem{theorem}{Theorem}
\newtheorem{definition}[theorem]{Definition}
\newtheorem{ass}[theorem]{Assumption}
\newtheorem{remark}[theorem]{Remark}
\renewcommand{\Re}{\mathbb{R}}
\newcommand{\U}{\mathcal{U}}
\newcommand{\umax}{u_{\max}}
\DeclareMathOperator*{\minimize}{minimize} 		% \minimize	minimize
\DeclareMathOperator*\stt{subject\ to}%	 
\title{\LARGE \bf
A Scenario-based Model Predictive Control Scheme for Pandemic Response through Non-pharmaceutical Interventions
}
\author{Domagoj Herceg$^{1}$, Marco Dell'Oro$^{1}$, Riccardo Bertollo$^{1}$, Fuminari Miura$^{2}$ \\
Paul de Klaver$^{3}$, Valentina Breschi$^{4}$, Dinesh Krishnamoorthy$^{1}$, Mauro Salazar$^{1}$% <-this % stops a space
\thanks{*This work was supported by ZonMw grant no. 10430362220008.}% <-this % stops a space
\thanks{$^{1}$The authors are with the Department of Mechanical Engineering,
        Eindhoven university of Technology, Eindhoven, The Netherlands. \newline
        {email: \tt\small \{d.herceg,m.delloro,r.bertollo,\newline d.khrishnamoorthy, m.r.u.salazar\}@tue.nl}}%
\thanks{$^{2}$Fuminari Miura is with the Centre for Infectious Disease Control, National Institute for Public Health and the Environment (RIVM),
Bilthoven, the Netherlands, {email: \tt\small fuminari.miura@rivm.nl}}%
\thanks{$^{3}$Paul de Klaver is with the Maxima Medisch Centrum, Eindhoven, The
        Netherlands {\tt\small p.deklaver@mmc.nl}}%
\thanks{$^{4}$Valentina Breschi is with the Department of Electrical Engineering,
        Eindhoven university of Technology, Eindhoven, The Netherlands. \newline
        {email: \tt\small v.breschi@tue.nl}}%
% \thanks{$^{2}$Bernard D. Researcheris with the Department of Electrical Engineering, Wright State University,
%         Dayton, OH 45435, USA
%         {\tt\small b.d.researcher@ieee.org}}%
}
\begin{document}

\maketitle
\thispagestyle{plain}
\pagestyle{plain} 

%%%%%%%%%%%%%%%%%%%%%%%%%%%%%%%%%%%%%%%%%%%%%%%%%%%%%%%%%%%%%%%%%%%%%%%%%%%%%%%%
\begin{abstract}
    This paper presents a scenario-based model predictive control (MPC) scheme designed to control an evolving pandemic via non-pharmaceutical intervention (NPIs). The proposed approach combines predictions of possible pandemic evolution to decide on a level of severity of NPIs to be implemented over multiple weeks to maintain hospital pressure below a prescribed threshold, while minimizing their impact on society. 
    Specifically, we first introduce a compartmental model which divides the population into Susceptible, Infected, Detected, Threatened, Healed, and Expired (SIDTHE) subpopulations and describe its positive invariant set. This model is expressive enough to explicitly capture the fraction of hospitalized individuals while preserving parameter identifiability  w.r.t.  publicly available datasets.
    Second, we devise a scenario-based MPC scheme with recourse actions that captures potential uncertainty of the model parameters. e.g., due to population behavior or seasonality.
    Our results show that the scenario-based nature of the proposed controller manages to adequately respond to all scenarios, keeping the hospital pressure at bay also in very challenging situations when conventional MPC methods fail.
\end{abstract}
%%%%%%%%%%%%%%%%%%%%%%%%%%%%%%%%%%%%%%%%%%%%%%%%%%%%%%%%%%%%%%%%%%%%%%%%%%%%%%%%

%%%%%%%%%%%%%%%%%%%%%%%%%%%%%%%%%%%%%%%%%%%%%%%%%%%%%%%%%%%%%%%%%%%%%%%%%
%%%%%%%%%%%%%%%%%%%%%%%%        MAIN           %%%%%%%%%%%%%%%%%%%%%%%%%%%
%%%%%%%%%%%%%%%%%%%%%%%%%%%%%%%%%%%%%%%%%%%%%%%%%%%%%%%%%%%%%%%%%%%%%%%%%

%\input{Sections/intro.tex}
\section{INTRODUCTION}\label{sec:introduction}

The latest COVID-19 pandemic has unveiled the fragility of the healthcare system worldwide, as well as the limitation of existing strategies at the disposal of policymakers to face a pandemic outbreak, namely vaccinations and non-pharmaceutical interventions (NPIs). 
Fueled by the need to respond to pandemic outbreaks in a more timely manner, several researchers have thus focused on establishing approaches to respond to potential future pandemics either through vaccine allocation (see, e.g., \cite{Miura_optimal,tonkens_vaccine,calafiore_vaccine}) or scheduling NPIs~\cite{Britton,optiml_bliman}.  

Although the latter kinds of countermeasures have proven particularly effective in practically containing virus transmissions~\cite{KUCHARSKI20201151,NPI_effectivnes}, they impose a severe burden on society~\cite{economicImpact_1,economicImpact_2} hence, their use should be as minimal as possible. Along this line, the study in~\cite{optiml_bliman} highlights the optimality of implementing a single NPI with the highest possible severity, simultaneously highlighting the pivotal role played by the timing with which such an NPI is enacted. In particular, \cite{Britton} shows that the optimal timing of interventions occurs when the infected subpopulation hits approximately $20\%$ of the total. Such an optimal timing is thus non-implementable in a real-world scenario, as the prevalence (i.e., the total number of infected individuals at a given time point) at that epidemic stage would overwhelm the healthcare system of any country. Another line of research focuses on limiting the peak prevalence at any time of the pandemic. The central objective of NPIs is akin to what is colloquially known as \emph{flattening the curve}, i.e., keeping the prevalence under a certain threshold for the healthcare system to assist individuals needing medical support. With this strategy, policymakers decide on the moment to implement more stringent control measures, and once the epidemic starts to decline as a consequence of the set of NPIs, they often lift those measures to balance societal activities during the COVID-19 pandemic. Markedly, this strategy is much more relevant in the practice of outbreak management as the critical threshold can be defined by the total capacity of the healthcare system of the country of interest. %\Marco{ \textbf{Naive approaches and Open-loop Problems} 
Although there have been approaches that try to combine both strategies mentioned above~\cite{Ferramosca}, these still fall short of being implementable in a real-life setting. First, existing studies often investigate this problem theoretically, relying on naive assumptions such as overly simple models, known parameters, and linear cost criteria. Once computed, NPI measures are assumed to be implemented for the duration of the pandemic
without accounting for possible re-adaptation of interventions.

%\Marco{\textbf{MPC introduction and related problems in current literature} 
A more relevant approach, considering the need for continuous adaptation in a predictive fashion, would be the so-called Model Predictive Control (MPC)~\cite{MPC_book} strategy. In this feedback framework, a control action is recalculated
periodically to reflect the latest available information and to account for potential unforeseen disturbances.
Some studies have already adapted MPC and \emph{MPC-like} approaches~\cite{allgower_MPC_covid,MPC_noteminalcost,MPC_teminalcost,N-Step-Ahead,NMPC_integer} to pandemic
management. %However, most of them still fail to account for the uncertainty of the temporal evolution of an epidemic and the possibility of different intervention strategies due to the differential epidemic progression. 
%\textbf{Part where we advocate for MPC}
Along the same lines of these works, we argue that MPC can be the standard reference approach in pandemic control, as it mimics what the NPI strategies aim to achieve: %; 
assessing the current state of the pandemic, suggesting interventions based on the predicted evolution, and reassessing and refining their recommendations based on the latest available data and insights. %In the MPC framework, the anticipation of future measurements in the prediction horizon is known as \emph{recourse}, while control actions taken with respect to those measurements are called \emph{recourse actions}.  
Other key points in favor of MPC for pandemic control lay in its feedback nature, introducing robustness in the closed loop, and its capability to incorporate constraints on the states and control actions. However, even if featuring these aspects, most existing MPC approaches for pandemic control still fail to account 
for the uncertainty of the temporal evolution of an epidemic and the possibility of different intervention strategies due to the differential epidemic progression. 

%In this paper, we go a step further and advocate for a \emph{scenario-based} approach. \emph{Scenario}, in this framework, refers to a particular realization of the uncertainty. By accounting for many scenarios, we can describe a whole gamut of possible realizations and account for them explicitly. Moreover, we can compute different control strategies for each particular scenario (uncertainty realization) subject to causality constraints~\cite{min-max-MPC}. This allows us to account for possible imprecision in %the parameter estimation phase the prediction model describing the pandemic evolution, as well as being robust against uncertainty in the effects of the proposed NPIs.

The common ground between virtually all control approaches for pandemic management, and particularly for MPC-based strategies, is modeling the pandemic evolution by \emph{compartmental models}. For an overview of compartmental models, we point the reader to~\cite[Chapter 2]{compartmental_models}.
In recent years, different authors have proposed extensions and improvements of the original SIR~\cite{SIR} model, while trying to maintain its simplistic nature \cite{Gibson2020, Grimm2021, Anastassopoulou2020 }.
The main limitation of the aforementioned models is the absence of a compartment able to capture the number of admissions to hospitals or intensive care units (ICU), which is arguably the most critical factor in a pandemic outbreak.
To address this shortcoming, the SIDARTHE model~\cite{SIDARTHE} has been developed to accurately describe the spread of COVID-19 in Italy, inspiring
further development of other complex compartmental models, such as \cite{Papageorgiou2023, Carli2020}. The main issue with these very detailed models is that parameter identifiability is rarely guaranteed (or even considered), ultimately not allowing the estimation of the parameters that characterize these models from the data and thus requiring the use of \textquotedblleft standard\textquotedblright \ values (potentially not validated). This raises the question of how reliable these models can be in predicting the future behavior of pandemics and informing policy decisions. 

To address these limitations of the current state-of-the-art for pandemic control, we i) propose the SIDTHE compartmental model, which retains the compartments that are critical in pandemic management (i.e., those related to hospital/ICU patients) while being practically identifiable with open-source pandemic data. On the modeling side, SIDTHE strikes a balance between expressiveness and identifiability compared to other models found in the literature. Then, toward pandemic control, we ii)  analyze a positive invariant set for the SIDTHE model, which is a key property to ensure safety in pandemic management. Note that, looking at pandemic models, invariant sets have been described for SIR and SERI models in~\cite{epidemic_robust_admissible_sets}.

Lastly, we iii) introduce \emph{scenario-based}\footnote{A \emph{scenario} refers to a particular realization of the uncertainty.} MPC approach for pandemic management. This choice allows us to go a step further with respect to existing strategies for control-based pandemic management. Indeed, 
by accounting for many scenarios, we can describe a whole gamut of possible realizations of the uncertainties affecting pandemic control problems and account for them explicitly. Moreover, we can compute different control strategies for each
particular scenario (uncertainty realization) subject to causality constraints~\cite{min-max-MPC}.
This allows us to account for possible imprecision in the prediction model describing the pandemic evolution, as well as being robust against uncertainty in the effects of the proposed NPIs. 

%which we believe to strike a  trade-off between expressiveness and identifiability properties compared to other models found in the literature and 
%The main contributions of this paper are i)  introducing the scenario-based MPC approach for pandemic management,  ii) proposing the SIDTHE compartmental model which we believe to strike a  trade-off between expressiveness and identifiability properties compared to other models found in the literature and  iii) analyzing a positive invariant set for the SIDTHE model, a key property %in recursive feasibility of online optimization approaches such as MPC. of compartmental models particularly when it relates to safety.

% Containing pandemic state with a positive invariant set means that we will always have an NPI at our disposal to keep the subpopulation of interest under a given threshold at any time, for all times. However, this approach may be too conservative in practice.
% We also note that the scenario-based MPC approach is agnostic to the choice of the compartmental model and can be used for simple models such as SIR as well as more complicated models such as SIDARTHE.

The remainder of the paper is structured as follows: Section II briefly reviews the basic assumptions and properties of compartmental models. In Section III, we present our SIDTHE model and describe its positive invariant set. %In Section IV we 
We then introduce the scenario-based stochastic NMPC approach (see Section IV) and present a numerical case study in Section V. Conclusions are drawn in Section VI, with an outlook on possible directions for future work.

\paragraph*{Notation}
The set of real vectors of dimension $n$ %A set of real numbers of dimension $n$ 
is denoted with $\Re^n$. A set of real, non-negative (positive) numbers is denoted with $\Re^n_+$ $(\Re^n_{++})$. 
As a convention, we shall drop the superscript when the dimension $n$ 
is equal to one and use the (double) minus subscript when dealing with (negative) non-positive numbers.
\section{Compartmental Models in Epidemiology}
For the purpose of introduction, we consider a generic $n$ state compartmental 
model of the pandemic given by a set of $n$ ordinary differential equations (ODEs) 
succinctly written as
\begin{align}
\frac{\mathrm{d} x(t)}{\mathrm{d} t} = f(x(t),\theta(t)),
\label{eq:dynamics_auto}
\end{align}
where $x(t) \in \Re^n_{+}$ is the non-negative state vector, %and 
$\theta \in \Re^{m}_{++}$ is a vector of strictly positive parameters and the transition function $f : \Re^n \times \Re^m \rightarrow \Re^n$ satisfies the following. 

\begin{ass}
    The function $f : \Re^n \times \Re^m \rightarrow \Re^n$ in~\eqref{eq:dynamics_auto} is continuously differentiable, 
    both in states and in parameters.
\end{ass}
Each of the $n$ ODEs corresponds to a subpopulation of interest. In general, these can be divided into a source, usually referred to as susceptible, a set of transient nodes, related to infected individuals, and the accumulators (sink or drain nodes) such as healed or people who have passed away during the pandemic. Accumulator nodes have no outflows, while a single source node has no inflows. The most simple model in this nomenclature is the SIR model. The most considered compartment is the number of infected (or infectious) people which we will denote with $I$,  where $I(t)$ for $t \ge 0$ describes the evolution of infected individuals as a part of the population. Other names will be identified as it is opportune in the remainder of the paper.

Compartmental models further satisfy mass conservation, that is,
$\sum_i \dot{x}_i(t) = 0$. As a consequence,  
%\begin{align}
$\sum_i^n x_i(t) = 1 , \quad \forall t \in \Re_{+}$,
%\label{eq:sum_to_one}
%\end{align} 
with states scaled to always sum to one, which is the standard convention. This property implies that the population remains constant, that is, there is
no outflow of people (emigration) and no inflow (no immigration and new births are negligible). Meanwhile, it implies that the state trajectory is constrained to lie in a (probability) simplex
\begin{align}
\label{eq:prob_simplex}
    \Delta_{n} = \{ x \in \Re^n \mid 0 \le x \le 1, 1^\top  x = 1  \}.
\end{align}
%Each of the $n$ ODEs corresponds to a subpopulation of interest. In general, these can be divided into a source, usually referred to as susceptible, a set of transient nodes, related to infected individuals, and the accumulators (sink or drain nodes) such as the healed or people that have passed away during the pandemic. Accumulator nodes have no outflows, while a single source node has no inflows. The most simple model in this nomenclature is the SIR model. The most considered compartment is the number of infected (or infectious) people which we will denote with $I$,  where $I(t)$ for $t \ge 0$ describes the evolution of infected individuals as a part of the population. Other names will be identified as it is opportune in the remainder of the paper.
 
%We further assume that we
In pandemic control, we assume that a policymaker can influence the evolution of the pandemic by exerting a certain control action (policy) $u \in \U = [0,\umax]$.
The dynamics in \eqref{eq:dynamics_auto} is thus further extended as:
\begin{align}
    \frac{\mathrm{d} x(t)}{\mathrm{d} t}  = f(x(t),u(t), \theta(t)).
    \label{eq:dynamics_control}
\end{align}
Moreover, the control variable is usually related to a decrease in the parameter tied to the disease spread. This is also the case in our paper, as we will explain in Section~\ref{sec:SIDTHE}.

\section{The SIDTHE Compartmental Model}\label{sec:SIDTHE}
We now introduce the controlled SIDTHE model which will be used to describe the underlying pandemic. The model is given by a set of six nonlinear ODEs\footnote{All derivatives are taken with respect to time, but we drop the dependence on $t$ for simplicity.% and note that all derivatives are taken with respect to time
}% (we drop the dependence on $t$ for simplicity and note that all derivatives are taken with respect to time)
\begin{subequations}
    \begin{align} 
      \dot{S} &= -\alpha (1 - u) S I        \label{eq:def_Sdot}   \\
      \dot{I} &=  \alpha (1 - u) S I - \gamma  \left (1 + \frac{\lambda}{\lambda + \gamma}  \right ) I \label{eq:def_Idot}\\
      \dot{D} &=  \gamma I - (\delta  +\lambda) D \\
      \dot{T} &=  \delta D - (\sigma + \tau) T\\
      \dot{H} &=  \sigma T + \lambda D + \lambda \left  (\frac{ \gamma }{\lambda + \gamma} \right ) I \\
      \dot{E} &=  \tau T,
    \end{align}
    \label{eq:SIDTHE}
\end{subequations}
alongside an initial condition $x_0 \in \Delta_6$.

%\emph{Susceptible}, \emph{Infectious}, \emph{Detected}, 
%\emph{Threatened}, \emph{Healed}, and \emph{Expired} fractions of the total population.

According to~\eqref{eq:SIDTHE}, the population is divided into six compartments, each representing different stages of infection.
Compartment $S$ (susceptible) represents uninfected healthy individuals, $I$ (infected) includes asymptomatic, pauci-symptomatic and symptomatic undetected individuals; $D$ (diagnosed) gathers infected individuals who have been detected by means of positive test and recorded by authorities; $T$ represents both hospitalized population (not in life-threatening danger) and on life support intensive care individuals. The compartment $E$ represents deceased individuals as a consequence of the disease, while $H$ represents healed individuals. 

Concerning the parameters of the model, 
$\alpha$ represents the transmission rate due to contacts between susceptible and infected individuals, while $\gamma$ is the diagnosis parameter, which denotes the detection rate of individuals in $I$. The critical/aggravation parameter is denoted by $\delta$ and represents the rate at which detected individuals develop severe symptoms with in-hospital support needs. In addition, recovery rates for hospitalized individuals are denoted by $\sigma$. For the infected subpopulation that does not need hospital care ($I$ and $D$), a similar recovery rate based on $\lambda$ is taken into account. Finally, the mortality rate parameter $\tau$ dictates the flow from $T$ to $E$.
All the aforementioned parameters 
\begin{align}
\theta = [\alpha,\,\gamma,\,\lambda,\,\delta,\,\sigma,\,\tau]^\top,  
\label{eq:SIDTHE_params_vector}
\end{align}
are strictly positive.
The control action $u$ is supposed to belong to the set $\U =  [0, \umax]$, where $ 0 \le \umax  <1$ is a proxy for the severity of non-pharmaceutical interventions we can exert and is manifested as a reduction of $\alpha$, see \eqref{eq:def_Sdot}.
% The actual severity experienced by the society is expressed with cost function $\ell(u)$.
Normal societal functioning corresponds to $u = 0$, while the most restrictive lockdown corresponds to $u = \umax$. 
The inflow and outflow of people to and from various compartments is sketched in Figure~\ref{fig:sidthe}.
\begin{figure} 
    \centering
    \includegraphics[scale = 1.1]{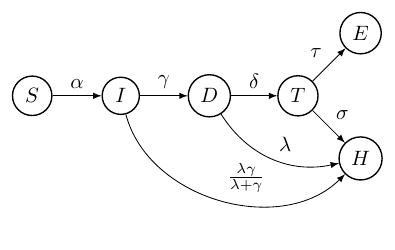}\vspace{-.5cm}
    \caption{A directed acyclic graph representation of the SIDTHE model.}
    \label{fig:sidthe}
\end{figure}
We also assume that people admitted to hospital care are detected in advance and that they are perfectly isolated.
Note that since, on average, we expect that the $D$ compartment is composed of individuals who are showing symptoms, the flow from that compartment to $H$ should be faster than the flow from $I$ to $H$, which we model with $\frac{\lambda \gamma}{\lambda + \gamma}$ outflow rate from $I$ compartment.

The main motivation behind the proposed model is to capture the dynamics of \emph{threatened} individuals, namely the portion of the population severely affected by the disease (and thus in need of hospitalization and/or intensive care) while retaining desirable theoretical properties like observability and, in particular, identifiably. This last property is particularly relevant as, if satisfied, it implies that one can uniquely determine the parameters of the model given publicly pandemic available datasets, without the need for resorting to (eventually not validated) assumptions on their values. It should be noted that this property is often not verified in complex compartmental models such as SIDARTHE~\cite{SIDARTHE}, thus invalidating the possibility of uniquely estimating their parameters from open-source data.
%Put more simply, we are trying to find a trade-off between the expressiveness of the model and identifiability, leading to reliable predictions. We want to emphasize that, unlike SIDARTHE, 
Meanwhile, the parameters of the model in \eqref{eq:SIDTHE} are structurally and practically identifiable, with the model meeting the practical identifiability test as outlined in~\cite{Kreutz2018} using publicly available data on the COVID-19 pandemic in Italy~\cite{GitHubCOVID19}.  

\subsection{Positive Invariance}
When trying to contain a pandemic outbreak, it is likely that one of the goals is to keep the hospitalized subpopulation $T$ below a %given 
threshold %, that we denote by 
$T_{\max}$, ultimately preventing an ICU overload.
As the number of hospitalized people is the state $T$ of the dynamical system \eqref{eq:SIDTHE}, this goal can be expressed by defining a \textit{positive invariant} set within which the number of infected individuals remains below the prescribed threshold.
Indeed, any initial state of the pandemic that lies within this set is guaranteed to never leave it. %Invariant sets for SIR and SERI models have been described in~\cite{epidemic_robust_admissible_sets}.
%In particular, our main concern for the definition of an invariant set is the compartment $T$: as previously mentioned, our goal is to keep the hospitalized subpopulation under a given threshold, that we denote by $T_{\max}$.

To enforce this, we need to have $\dot T \leq 0$ whenever $T = T_{\max}$. However, this cannot be achieved by inspecting the department $T$ alone. In fact, we can see that the control action $u$ does not appear in the dynamics of $T$ in \eqref{eq:SIDTHE}, but it does affect it indirectly through the compartments $S,I$ and $D$. For this reason, the bound on $T$ implicitly imposes bounds on the upstream nodes $S$, $I$, and $D$.
The positive invariant set that we consider is then a box given by
% \begin{align}
% \label{eq:terminal_set}
%     X_f &:= \prod_{i=1}^6 [0, x_i^{\max}] \\ &:= [0, S_{\max}] \times [0, I_{\max}] \times [0, D_{\max}] \times [0, T_{\max}] \nonumber \\
%     & \hspace{150pt} \times [0,1] \times [0,1], \nonumber
% \end{align}
\begin{align} 
    X_\mathrm{f} \!&:=\hspace{-1mm} [0, S_{\max}] \!\times\! [0, I_{\max}] \!\times\! [0, D_{\max}] \!\times\! [0, T_{\max}] \!\times\! [0,1]^2,
\label{eq:inv_set} 
\end{align}
where the first three upper bounds depend on $u_{\max}$ and $T_{\max}$ as follows:
\begin{subequations}
    \begin{align}
        %S  \in 
        S_{\max} &:= \frac{ \gamma}{\alpha (1 - u_{\max})} \left (1+\frac{\lambda}{\lambda + \gamma} \right ), \\
        %I  \in 
        I_{\max} &:= \frac{(\delta  +\lambda)(\sigma + \tau)}{\gamma\delta} T_{\max}, \\
        D_{\max} &:= \frac{(\sigma + \tau)}{\delta} T_{\max}.
    \end{align}
    \label{eq:terminal_set_individual}
\end{subequations}
To prove that the set $X_\mathrm{f}$ in \eqref{eq:inv_set} is a positive invariant set for the dynamics \eqref{eq:SIDTHE}, we introduce some key results and definitions, starting with a tangent cone to a set.

\begin{definition}[Tangent cone]
        The Bouligand tangent cone to a closed set $S$ at point $x$, denoted $T_S (x)$, is defined as
        \begin{align*}
            T_S(x) := \left\{v \in \Re^n : {\lim \inf}_{t \rightarrow 0^+} \frac{d_S(x + tv)}{t} = 0 \right\}.
        \end{align*}
\end{definition}
    
Intuitively, the tangent cone collects all the directions that, from $x$, point ``inside'' (or tangentially to) the set.
This definition allows us to present a fundamental result in set invariance, known as Nagumo's theorem.
Even though this theorem is stated for autonomous systems (without a control action), it can be easily extended to characterize controlled invariance as follows (see the discussion in \cite[\S3]{BLANCHINI_set_invariance}).
\begin{theorem}[Nagumo, 1942]
        Consider the system $\dot x = f(x,u)$, with $x \in \Re^n$, $u \in \Re^p$, and $f: \Re^{n+p} \rightarrow \Re^n$ Lipschitz continuous.
        Suppose that the input $u$ is selected as $u = \Phi(x)$ with $\Phi: \Re^n \rightarrow \Re^p$ also Lipschitz continuous.
        The closed, convex set $S$ is positively invariant for the system if and only if
        \begin{align}
            \label{eq:pos_inv_def}
            f(x, \Phi(x)) \in T_S(x), \quad \forall x \in \partial S,
        \end{align}
        where $\partial S$ denotes the boundary of the set $S$.
\end{theorem}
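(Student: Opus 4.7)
The statement is an if-and-only-if, so I would handle the two implications separately. The necessity direction is essentially a direct unwinding of definitions. The sufficiency direction is the substantive part: the plan is to study the time-evolution of the distance $d_S(\phi(t))$ along closed-loop trajectories and apply a Gronwall-type argument on its upper Dini derivative.

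For necessity, fix $x \in \partial S$ and let $\phi(\cdot)$ denote the closed-loop solution of $\dot{y} = f(y, \Phi(y))$ with $\phi(0) = x$; positive invariance gives $\phi(t) \in S$ for all $t \ge 0$, hence $d_S(\phi(t)) = 0$. A first-order expansion $\phi(t) = x + t f(x, \Phi(x)) + o(t)$, combined with the $1$-Lipschitz property of $d_S$, then yields $d_S(x + t f(x, \Phi(x))) = o(t)$, which is exactly the Bouligand condition $f(x, \Phi(x)) \in T_S(x)$ after dividing by $t$ and taking $\liminf_{t \to 0^+}$.

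For sufficiency, fix $x_0 \in S$ and set $g(t) := d_S(\phi(t))$, where $\phi$ is the closed-loop trajectory starting at $x_0$. Closedness and convexity of $S$ imply that the metric projection $\pi_S$ is single-valued and non-expansive. Writing $y(t) := \pi_S(\phi(t)) \in \partial S$, the tangent-cone hypothesis at $y(t)$ gives $d_S(y(t) + h f(y(t), \Phi(y(t)))) = o(h)$. Combining this with a Taylor expansion of $\phi(t+h)$, the triangle inequality for $d_S$, and the Lipschitz constant $L$ of $f \circ \Phi$, I would derive
\[
d_S(\phi(t+h)) \le (1 + hL)\, g(t) + o(h),
\]
so that the upper Dini derivative satisfies $D^+ g(t) \le L\, g(t)$. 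Since $g(0) = 0$, Gronwall's inequality then forces $g \equiv 0$, i.e.\ $\phi(t) \in S$ for all $t \ge 0$.

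\textbf{Main obstacle.} The delicate step is in sufficiency: comparing the vector field at the (possibly exterior) trajectory point $\phi(t)$ with the one at its projection $y(t) \in \partial S$. This rests on three ingredients working in concert --- the closed convexity of $S$ that yields a well-behaved non-expansive projection, the Lipschitzness of $f \circ \Phi$ that bounds $\lvert f(\phi(t),\Phi(\phi(t))) - f(y(t), \Phi(y(t)))\rvert$ by $L\, g(t)$, and the specific geometry of the Bouligand tangent cone at boundary points of a convex set. Dropping any of these hypotheses (e.g.\ non-convex $S$ or merely continuous $f$) would require substantially heavier machinery such as proximal normals or viability-theoretic tools, and the clean Gronwall-style conclusion would no longer be immediate.
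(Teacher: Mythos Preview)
Your sketch is a correct outline of the classical proof of Nagumo's theorem: the necessity direction via a first-order Taylor expansion of the trajectory together with the $1$-Lipschitz property of $d_S$, and the sufficiency direction via a Dini-derivative estimate on $t \mapsto d_S(\phi(t))$ followed by Gronwall. The ingredients you identify---single-valued non-expansive projection onto a closed convex $S$, Lipschitz continuity of the closed-loop field $f(\cdot,\Phi(\cdot))$, and the tangent-cone hypothesis at the projected boundary point---are exactly the ones needed, and your ``main obstacle'' paragraph correctly pinpoints where the work lies.

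However, there is nothing to compare your argument against: the paper does \emph{not} prove this theorem. Nagumo's theorem is quoted as a classical external result (with a pointer to \cite{BLANCHINI_set_invariance}) and then \emph{applied} in the proof of the subsequent theorem, which establishes positive invariance of the box $X_\mathrm{f}$ for the SIDTHE dynamics by checking the sign of each $\dot{x}_i$ on the relevant faces. So your proposal is not wrong, but it is addressing a statement the paper treats as a black-box citation rather than something it proves; if your goal was to supply the proof the paper relies on, your plan is sound, but if your goal was to reconstruct the paper's own argument, you have targeted the wrong theorem.
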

    
Using Nagumo's theorem, we can show that the set $X_\mathrm{f}$ in \eqref{eq:inv_set} is positively invariant for the controlled SIDTHE dynamics \eqref{eq:SIDTHE}.
This result is summarized in the following theorem.

\begin{theorem}
        Given the upper bound $u_{\max} > 0$ on the control action, it is always possible to select $u = \Phi(x) \in [0, u_{\max}]$ such that the set $X_\mathrm{f}$ in \eqref{eq:inv_set} is positively invariant for the SIDTHE dynamics \eqref{eq:SIDTHE}.
\end{theorem}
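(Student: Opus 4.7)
The plan is to apply Nagumo's theorem with the constant feedback $\Phi(x) \equiv u_{\max}$, which is trivially Lipschitz. Since $X_{\mathrm{f}}$ is a closed convex box, its Bouligand tangent cone at a boundary point factorizes coordinatewise into half-lines: on a face where $x_i = 0$ the condition $f(x, u_{\max}) \in T_{X_{\mathrm{f}}}(x)$ becomes $f_i(x, u_{\max}) \geq 0$, and on a face where $x_i$ equals its upper bound it becomes $f_i(x, u_{\max}) \leq 0$. It therefore suffices to verify these half-space inequalities one face at a time, with no interaction at edges and corners.

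I would first dispose of the lower bounds. Every negative term in the right-hand side of \eqref{eq:SIDTHE} carries the outgoing compartment as a factor, so the restrictions $\dot S |_{S=0}$, $\dot I |_{I=0}$, $\dot D |_{D=0}$, $\dot T |_{T=0}$ are all equal to the corresponding inflow terms, which are non-negative; and $\dot H$, $\dot E$ are non-negative everywhere. The upper bounds $H \leq 1$ and $E \leq 1$ can only become active in combination with all other states being zero, because trajectories stay in the simplex $\Delta_6$ of \eqref{eq:prob_simplex}; at such points $\dot H = \dot E = 0$, again consistent with the tangent cone.

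The interesting part of the argument is the cascade on the four upper bounds of $S, I, D, T$, where the algebraic identities encoded in \eqref{eq:terminal_set_individual} play the decisive role. On $\{S = S_{\max}\}$, $\dot S \leq 0$ is automatic from \eqref{eq:def_Sdot} since all factors in $-\alpha (1 - u_{\max}) S_{\max} I$ are non-negative. On $\{T = T_{\max}\}$, using $D \leq D_{\max}$ inside $X_{\mathrm{f}}$ and the definition of $D_{\max}$, one gets $\dot T \leq \delta D_{\max} - (\sigma + \tau) T_{\max} = 0$. The same pattern, with $I \leq I_{\max}$ and the definition of $I_{\max}$, gives $\dot D \leq \gamma I_{\max} - (\delta + \lambda) D_{\max} = 0$ on $\{D = D_{\max}\}$. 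Finally, on $\{I = I_{\max}\}$ the bound $S \leq S_{\max}$ together with the choice $u = u_{\max}$ yields $\alpha (1 - u_{\max}) S \leq \gamma\bigl(1 + \lambda/(\lambda + \gamma)\bigr)$ by the very definition of $S_{\max}$, so $\dot I \leq 0$ from \eqref{eq:def_Idot}.

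The main obstacle I anticipate is purely bookkeeping at edges and corners where several constraints are simultaneously active; because of the coordinatewise factorization of the tangent cone of a box, the per-face checks combine without interaction, and a single constant feedback $u_{\max}$ discharges every face simultaneously, so Nagumo's theorem delivers positive invariance of $X_{\mathrm{f}}$. A minor subtlety worth flagging is that $\Phi \equiv u_{\max}$ is clearly not the control one actually wants to apply in practice; the theorem only asserts the existence of some admissible $\Phi$, and optimizing the real NPI policy is deferred to the MPC formulation of the next section.
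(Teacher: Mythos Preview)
Your proposal is correct and follows essentially the same route as the paper: both apply Nagumo's theorem to the box $X_{\mathrm{f}}$, reduce the tangent-cone condition to per-face sign checks, dispose of the lower bounds and the $H,E$ upper bounds via the compartmental/simplex structure, and then cascade through $S,I,D,T$ using the algebraic identities in \eqref{eq:terminal_set_individual}. The only cosmetic difference is that you commit upfront to the constant feedback $\Phi\equiv u_{\max}$, whereas the paper phrases the $I$-face check as ``any $\Phi$ with $\Phi(I_{\max},\cdot)=u_{\max}$''; your concrete choice is a valid Lipschitz instance of that class and arguably makes the argument cleaner.
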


\begin{proof}
        Notice that $X_\mathrm{f}$ is a box set, therefore it is closed and convex.
        Given the particular shape of $X_\mathrm{f}$, denoting as $x_i, \; i \in \{1, \ldots, 6\}$ the $i$-th component of the state $x = (S,I,D,T,H,E)$, \eqref{eq:pos_inv_def} is equivalent to
        \begin{align*}
            \begin{cases}
                x_i = 0 \implies \dot x_i \geq 0, \\
                x_i = x_i^{\max} \implies \dot x_i \leq 0
            \end{cases} \quad \quad i \in \{1, \ldots, 6\}.
        \end{align*}
        Some of these conditions %are 
        are implicitly enforced by the compartmental nature of our model, see \eqref{eq:prob_simplex}.
        Indeed, if we consider any of the equations $\dot x_i$ in \eqref{eq:SIDTHE} and we substitute $x_i = 0$, we immediately obtain $\dot x_i \geq 0$ from the non-negativity of the other state components.
        Similarly, we can trivially check the upper bound condition for $H$ and $E$. Since the compartmental model enforces $1^\top x = 1$ (see \eqref{eq:prob_simplex}), any state component being equal to $1$ implies that the others are $0$.
        Consequently, $\dot H$ and $\dot E$ are $0$ whenever the corresponding compartment is $1$.
        
        It remains to check that, when $S,I,D,T$ are equal to their upper bounds, their derivative is negative.
        Clearly, $\dot S \leq 0$ for any non-negative $S,I$.
        The dynamics of $\dot I$, given the upper bound on $S$, yields
        \begin{align*}
            \dot I \leq I_{\max} \left( \frac{\gamma (1-u)}{1 - u_{\max}} \left( 1 + \frac{\lambda}{\lambda+\gamma} \right) - \gamma \left( 1 + \frac{\lambda}{\lambda+\gamma} \right) \right).
        \end{align*}
        Hance, any selection of $u=\Phi(I,\cdot)$ such that $\Phi(I_{\max}, \cdot) = u_{\max}$ results in $I = I_{\max} \implies \dot I \leq 0$, as needed.
        By substituting $I \leq I_{\max}$ and $D = D_{\max}$ in the dynamics of $\dot D$, we obtain
        \begin{align*}
            \dot D \leq T_{\max} \left( \gamma \frac{(\delta+\lambda)(\sigma+\tau)}{\gamma\delta} - (\delta+\lambda) \frac{\sigma+\tau}{\delta} \right) = 0.
        \end{align*}
        Similarly, substituting $D \leq D_{\max}$ and $T = T_{\max}$ yields $\dot T \leq 0$, completing the proof.
\end{proof}
 
\begin{remark}
     Although the design of the positive invariant set $X_\mathrm{f}$ in \eqref{eq:inv_set}-\eqref{eq:terminal_set_individual} does not take into account the uncertainty of the parameters, it is possible to obtain a set $X_\mathrm{f}$ in \eqref{eq:inv_set} that is common to all different combinations of parameter values.
     To do so, supposing that each parameter $\theta$ belongs to an interval $[ \theta_{\min}, \theta_{\max}]$, the upper bounds in \eqref{eq:terminal_set_individual} need to be computed by considering $\theta = \theta_{\min}$ whenever it appears in the numerator and considering $\theta = \theta_{\max}$ if it appears in the denominator.
\end{remark}

We can further establish that the values of $I$,$D$, and $T$ decay exponentially to zero in $X_\mathrm{f}$ following the approach in~\cite{MPC_teminalcost}.
Due to the monotonicity of $S$, we can overapproximate the solution of our system 
by considering a surrogate one  where $S(0)$ is kept constant.
In that case, the state evolution of the over-approximating sub-system is given by
\begin{align}
    \begin{bmatrix}
        \dot{I}\\
        \dot{D} \\
        \dot{T}
    \end{bmatrix}
=
\underbrace{\begin{bmatrix}
    \bar{\alpha} S(0) - \frac{\gamma(2\lambda + \gamma)}{\lambda + \gamma} & 0 & 0 \\
    \gamma & -(\lambda + \delta) & 0 \\
    0 & \delta & -(\sigma + \tau) 
\end{bmatrix}}_{F \left(S(0),\lambda, \gamma, \delta, \sigma, \tau\right)}
\begin{bmatrix}
    I\\
    D \\
    T
\end{bmatrix}
\label{eq:overapprox}
\end{align}
where $\bar{\alpha} = \alpha(1 - u)$.
The explicit solution to~\eqref{eq:overapprox} is then given by the matrix exponential 
\begin{align}
[I, \, D, \, T]^\top = e^{F \left(S(0),\lambda, \gamma, \delta, \sigma, \tau\right) t} [I_0, \, D_0, \, T_0]^\top.
\end{align}
Since $F \left(S(0),\lambda, \gamma, \delta, \sigma, \tau\right)$ is lower triangular, its eigenvalues are the elements of the main diagonal, and their negativity is equivalent to exponential decay to zero.
Clearly, the second and third eigenvalue are negative, as all parameters are strictly positive, while imposing the negativity of the first element of the main diagonal yields
\begin{align}
    S(0) < \frac{ \gamma( 2\lambda  + \gamma) }{\alpha (1 - u)(\lambda + \gamma)},
    \label{eq:R0condition}
\end{align}
where the right-hand side is the $R_0^{-1}$ of our model.
Indeed, from the dynamics of $I$ given by \eqref{eq:def_Idot}, we can see that $I$ increases whenever $S R_0 > 1$.
Note that our expression for $R_0$ coincides with the next generation matrix calculation~\cite{nextgenmatrix}. 
Meanwhile, the upper bound $S_{\max}$ in \eqref{eq:terminal_set_individual} corresponds to the maximum possible value for $R_0^{-1}$, given the constraints on the control action $u$.

\section{Scenario based stochastic Nonlinear Model Predictive Control}
%In this section we 
To set the ground for the description of the Scenario-based stochastic Nonlinear Model Predictive Control (SBSNMPC) algorithm we propose for pandemic control, we first describe the principles of Nonlinear Model Predictive Control (NMPC)~\cite{NMPC_grune}. At its core, the standard NMPC uses a nonlinear mathematical model of the process (virus transmission in our case) to predict its future evolution and choose the best possible control action, given a certain objective. It does so by solving a finite horizon optimization problem:
\begin{subequations}
    \begin{align}  
        \minimize_{x,u} \,      & \, \int_0^{t_f}  \ell(x(t), u(t) ) dt \\
        \stt  \quad \dot{x}(t) & = f(x(t), u(t), \theta)       \\
        g(x(t),u(t) )               & \le 0 ,      \\
        g_f(x(t) )             &  \le 0,           \\
        x(0)                & = x_0,                       
    \end{align}
    \label{eq:Lagrange_problem}
\end{subequations} 
at each time step. Once the first control input is applied and the new state is measured (or estimated), the algorithm is restarted. 
%\subsection{Nonlinear optimal control} To introduce the Nonlinear Model Predictive Control (NMPC) approach~\cite{NMPC_grune}, we firstly introduce the following canonical nonlinear optimal control problem: 
%\begin{subequations}
%    \begin{align}  
 %       \minimize_{x,u} \,      & \, \int_0^{t_f}  \ell(x(t), u(t) ) dt \\
 %       \stt  \quad \dot{x}(t) & = f(x(t), u(t), \theta)       \\
  %      g(x(t),u(t) )               & \le 0 ,      \\
   %     g_f(x(t) )             &  \le 0,           \\
    %    x(0)                & = x_0,                       
  %  \end{align}
  %  \label{eq:Lagrange_problem}
%\end{subequations} 
The canonical finite-horizon problem without terminal cost in~\eqref{eq:Lagrange_problem} is a generalization of the problem we aim to solve to tackle pandemic control. In particular, we assume that the cost function $\ell : \mathcal U \rightarrow \Re_+$ is convex on its domain and solely depends on the input.
This assumption rests on the observation that more stringent NPIs entail higher societal cost and that the relationship between them is often very steep. A linear cost function, used by many authors, satisfies this assumption, but we argue that more general expressions for the cost could provide more realistic representations of the real world. 

Note that \eqref{eq:Lagrange_problem} 
can be solved by indirect or direct methods. We will opt for the former option, transcribing the infinite-dimensional problem to its discrete, finite-dimensional nonlinear program version, which is then solved using off-the-shelf solvers such as~\cite{IPOPT}. To this end, the state update function $f(\cdot)$ is discretized into $f_d(\cdot)$, and the continuous-time integral is approximated by a finite sum. In this paper, this discretization is performed by adopting the multiple shooting approach, choosing fourth-order Runge–Kutta (RK4) as the integration method.

\subsection{Scenario-based NMPC}
The standard NMPC procedure previously outlined relies on a deterministic outlook on reality, not considering possible uncertainties causing differences between the predicted and the actual evolution of the controlled system. 

Unlike standard MPC, scenario-based model predictive control
considers multiple realizations of the parameter
of the prediction model and system states, 
making the control inputs more robust to parameter uncertainty. %, a 
We deem this feature important when dealing with %safety critical applications such as 
pandemic management, given the well-known challenges of exactly modeling pandemic evolution and the variability between the predicted and actual evolution of the pandemic also dictated by (often highly unpredictable) human behaviors.
Hereafter, we assume that the parameter vector $\theta$ (see \eqref{eq:SIDTHE_params_vector}) can take $N_S$ different values, or \emph{scenarios}, from a set $\Theta$. We will denote the $i$-th scenario by $\theta^i$. We follow this convention for other variables as well, hence state trajectory $x^i$ corresponds to the state evolution according to the control trajectory $u^i$ and the parameter scenario $\theta^i$. Note that, in what follows, we assume that a probability vector $p \in \Delta_{N_S}$, describing the probability of each scenario, is known.

\begin{figure}[!tb]
    \centering
    \includegraphics[scale = 0.65]{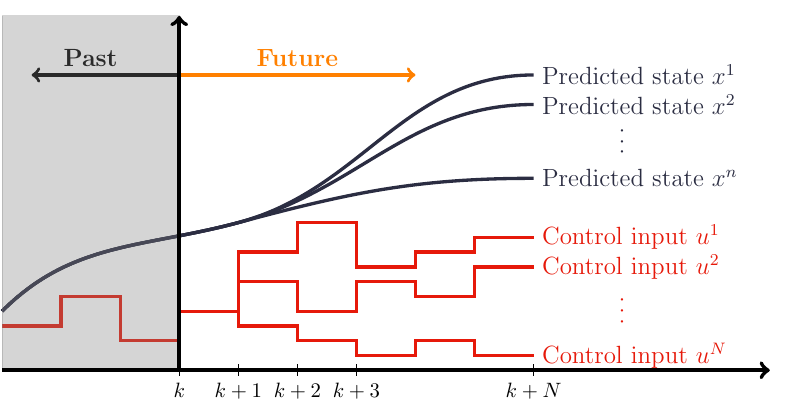}
    \caption{Depiction of the basic idea of scenario-based MPC. Note that 
    all the control trajectories have to agree on the first control action due to causality constraint}
    \label{fig:scenarioMPC}
\end{figure}
Based on these assumptions, the working principle of scenario-based SNMPC %approach 
is depicted in Figure~\ref{fig:scenarioMPC}.
At time $k$, we estimate the current state $x_0$ and predict $N_S$ different possible trajectories depending on parameters $\theta^i$ and control input sequences $u^i$. Note that $u^i = \{u^i_k\}_{k \in [0,N-1]}$ is a sequence (or trajectory) of decision variables, where the index $k$ is the discrete time step, and $N$ is the prediction horizon. These decision variables are optimized by solving the following  problem 
\begin{subequations}\label{eq:mpc_discrete}
    \begin{align}
        %\minimize_{ \{u^i\} } \,  \quad     \, \E_P &\left[ \sum_{k = 0}^{NT_s - 1} 
        \minimize_{ x^i, u^i } \,  \quad     \, \sum_{i = 0}^{N_S}& \sum_{k = 0}^{N - 1} p_i \, \ell( u_k ) \\
        %\minimize_{u} \,  \quad     \,  &\sum_{k = 0}^{NT_s - 1} \sum_{j = 1}^S p^j \ell( u_k ) \\
        \stt  \quad x_{k+1}^i & = f_d(x^i_k, u^i_k, \theta^i)                               \\
        u_k^i                 & \in [0,u_{\text{max}}]                                    \\
        Ax_k^i                & \le c_{\text{max}}                                        \\
        %x_{NT_s}^i            & \in \Xf \\
        u_0^0 - u_0^i &= 0 \\
        x_0^i - \bar{x_0} &= 0 \\
        \forall i &\in \{1,\dots,N_S\}, k \in \{1, N\} \nonumber 
    \end{align}
    \label{eq:pandemic_problem}
\end{subequations}
where $A$ is a selection matrix used to select the state trajectory that describes the compartment of interest and $c_{\text{max}}$ is a constant. Note that this problem is non-convex due to the typical structure of compartmental models, which involves bilinear terms in the dynamics (see~\eqref{eq:def_Sdot},\eqref{eq:def_Idot}).
%We also assume that a probability vector $p \in \Delta_{N_S}$, describing the probability of each scenario, is known.

Toward assessing the impact of different strategies to tackle uncertainties in pandemic control,
we consider two flavors of SBSNMPC which differ in how we select the control trajectories. In the first (referred to as the \textit{robust} approach), we have a single shared control trajectory for all realizations of the uncertainty vector $\theta$. The problem in \eqref{eq:pandemic_problem} is thus augmented with the following constraints
\begin{align}
u_k^0 = u_k^i, \quad \forall i \in \{1,\dots,N_S\}, \forall k \{1,\dots, N\}.
\end{align}
In the second (the \textit{recourse} approach), we assign a dedicated control trajectory to each realization of uncertainty.
The robust approach, in the spirit of~\cite{allgower_MPC_covid}, is more conservative than the recourse one. This can be easily seen, as the robust approach can be obtained from the recourse one, by requiring all the control trajectories to be the same, i.e., by introducing additional equality constraints. Nonetheless, we argue that the recourse approach, in addition to being less conservative, is also more informative for policymakers, as it provides insight into the required actions for each realization of the uncertain vector $\theta$.
Even though we end up with a larger optimization problem, we do not
consider it a hindrance. In fact, pandemic policies are usually updated every few weeks, leaving plenty of time to solve the optimization problem offline. 

\section{Numerical simulations}
In this section, we highlight some basic properties of our approach %by simulating the system in the closed loop
via a simulation study. To this end, we discretize the continuous system dynamics with RK4, %and 
with sampling time of one day. Note that, as the stage cost function depends only on the control actions (NPIs) which are piece-wise constant, %hence 
the evaluation of the integral is exact in its discrete-time version. The cost function is selected as $\ell(u) = u^2$ and 
maximal hospital capacity is set to $T_{\max} = 0.002$, i.e. $0.2 \%$ of the total population. The maximum reduction in virus transmissibility is set to $\umax = 0.75$ and NPI measures are kept constant for $T_{\textrm{NPI}} = 14$ days. %Finally, numerical 
All simulations were implemented using Julia~\cite{Julia-2017}.

\subsection{Scenario-based SNMPC}
We consider a SIDTHE model given by~\eqref{eq:SIDTHE} starting from the initial condition
$$x_0 = [0.99,\;0.008,\,1.9\times 10^{-4},\,1\times 10^{-4},\,0,\,0]^\top, $$ %and 
choosing the following nominal values for the SIDTHE's parameters 
\begin{align}
\theta = [0.35,\;0.1,\,0.09,\,2\times 10^{-3},\,0.015,\,0.01]^\top,
\label{eq:example_params}
\end{align}
with the parameters' ordering given in~\eqref{eq:SIDTHE_params_vector}.
To obtain different scenarios, we consider perturbations of $\pm 5\%$ on each of these parameters 
resulting in three possible values, i.e., $0.95 \cdot \theta_i,$ $\theta_i$ and $1.05 \cdot \theta_i$.
This yields $N_S = 3^6=729$ possible scenarios for the SIDTHE parameters realization.

\begin{figure}[ht]
    \centering
     \includegraphics[scale =0.9]{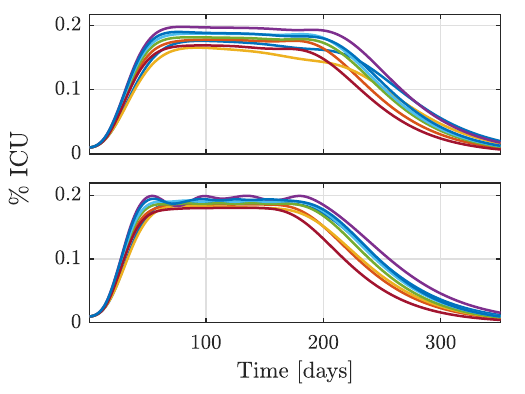}\vspace{-.3cm}
    \caption{Dynamics of $T$-\textit{Threatened} compartment with two approaches 
    in the closed-loop, \textit{robust} (top) and \textit{recourse} (bottom).}
    \label{fig:states}
\end{figure}

\begin{figure}[ht]
    \centering
    \includegraphics[scale = 0.9]{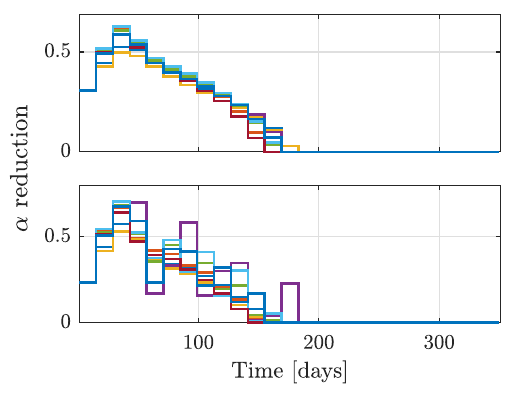}\vspace{-.3cm}
    \caption{Control action on the transmission rate $\alpha$ computed with \textit{robust} (top) approach and \textit{recourse} (bottom) approach.}
    \label{fig:control}
\end{figure}

Figure~\ref{fig:states} compares the evolution of the compartment $T$, for different parameter realizations, using the robust approach (top plot) and the recourse approach (bottom plot).
We can observe that the scenario-based MPC always leaves a safety margin to account for the possibility of a worst-case scenario occurring within the prediction horizon. Moreover, recourse MPC is shown to be slightly less conservative as it leaves less of a margin of safety for the same disturbances compared to the robust one. The corresponding trajectories of the NPI severity (i.e., the control input) are depicted in Figure~\ref{fig:control}.
Here, again, we note that recourse MPC leads to a slightly reduced control input on average, i.e., less strain on the population. 
Another difference lies in the first control actions for both approaches, where the recourse strategy has a significantly lower control input.

To illustrate the working principle of scenario-based MPC, we simulated the (controlled) pandemic outbreak until day $98$. Figures~\ref{fig:state_predictions}-\ref{fig:control_pred} depict a snapshot of the MPC algorithm predictions at this time instant.
Figure~\ref{fig:state_predictions}, in particular, shows how the value of the compartment $T$ approaches the threshold, without ever going over it for any of the possible scenarios.
This is in line with our goal, which was to contain the pandemic outbreak while minimizing the societal burden (corresponding to the intensity of the control actions, shown in Figure~\ref{fig:control_pred}).

% This is depicted in Figure~\ref{fig:state_predictions}. Here it is shown the general tendency 
% of the optimization-based approach to minimize the cost of interventions. In fact, it can be seen how most of the predicted $T$ trajectories are close to the constraint on the ICUs.

Lastly, Figure \ref{fig:InfeasTraj} shows the behavior of a standard (nominal) MPC algorithm implemented in our case study. We can see that this algorithm cannot handle \emph{bad} scenarios.
The predictions of the standard MPC are calibrated using the nominal parameter values~\eqref{eq:example_params}, without taking any uncertainty into account. Therefore, with some scenarios, it may happen that no control action can keep $T$ from going over $T_{\max}$, resulting in overrun of hospital capacity.

\begin{figure}[t]
    \centering
    \includegraphics[scale = 0.9]{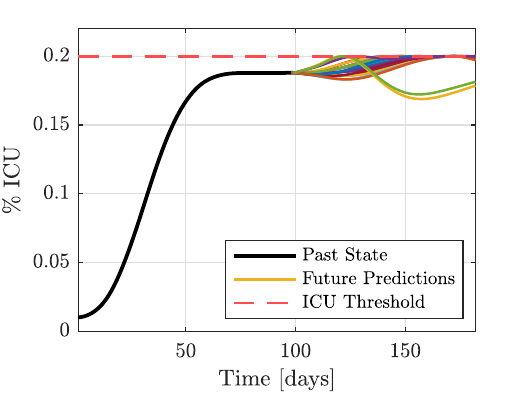}\vspace{-.3cm}
    \caption{Subset ($25$ of $3^6$) of different predicted evolutions for the \textit{T} compartment  at day $98$. The black curve depicts the  trajectory of $T$ on days $1$-$97$.}
    \label{fig:state_predictions}
\end{figure}

\begin{figure}[ht]
    \centering 
    \includegraphics[scale = 0.9]{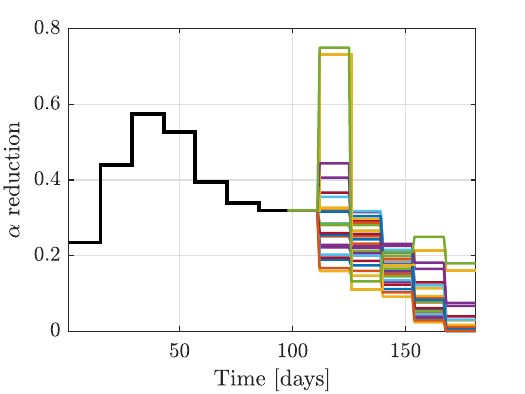}\vspace{-.3cm}
    \caption{Subset ($25$ of $3^6$) of different predicted control actions at day $98$. The black curve depicts the actual NPIs applied on days $1$-$97$.}
    \label{fig:control_pred}
\end{figure}

\begin{figure}[ht]
    \centering 
    \includegraphics[scale = 0.94]{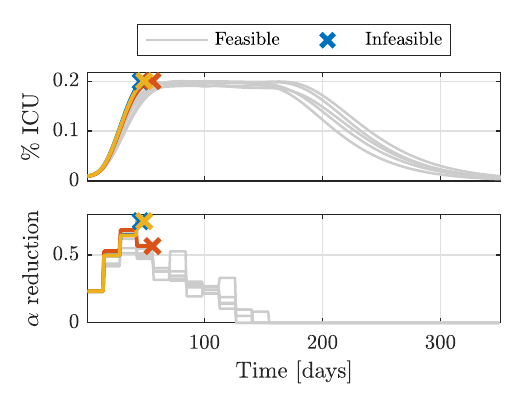}\vspace{-.3cm}
    \caption{Example of a vanilla MPC  infeasibility under parameter uncertainty. Some parameter sets yield a feasible solution, but this simple approach fails to find one for \emph{``bad''} scenarios}
    \label{fig:InfeasTraj}
\end{figure}
%%%%%%%%%%%%%%%%%%%%%%%%%%%%%%%%%%%%%%%%
%%%%%%%%%%%%%%%%%%%%%%%%%%%%%%%%%%%%%%%%
%%%% OTHER EXAMPLE %%%%%%%%%%%%%%%%%%%%%
%%%%%%%%%%%%%%%%%%%%%%%%%%%%%%%%%%%%%%%%
%%%%%%%%%%%%%%%%%%%%%%%%%%%%%%%%%%%%%%%%
\subsection{Safe Set Operation}
%In this example we
We now focus on highlighting the effects of operating within the positive invariant set for all possible disturbances.
We chose a safe set $X_\mathrm{f} = \cap_{i}^{N_S} X_\mathrm{f}(\theta^i)$, as the intersection of all positively invariant sets for every scenario $\theta_i$ calculated
as in~\eqref{eq:inv_set}-\eqref{eq:terminal_set_individual}.
Hence, $X_\mathrm{f}$ is positively invariant for any possible scenario realization and is given by~\eqref{eq:inv_set} with $S_{\max} = 1$,
$I_{\max} = 0.0188$, $D_{\max} = 0.0226$, $T_{\max} = 0.002$. %Moreover
Note that, starting from the same initial condition as considered in our previous simulation it holds $x_0 \in X_\mathrm{f}$.

Next, we simulate a recourse MPC controller with nominal parameters and constraint states $S,I,D$ and $T$ to be inside $X_\mathrm{f}$ at all times.
The role of MPC here is to minimize the societal cost while operating
within the safe set, as naive application of $\umax$ would be too conservative. Note that MPC is recursively feasible since $\umax$ is a viable control action to keep us within the same set at all times for any point in $X_\mathrm{f}$.

\begin{figure}[ht]
    \centering 
    \includegraphics[scale = 0.92]{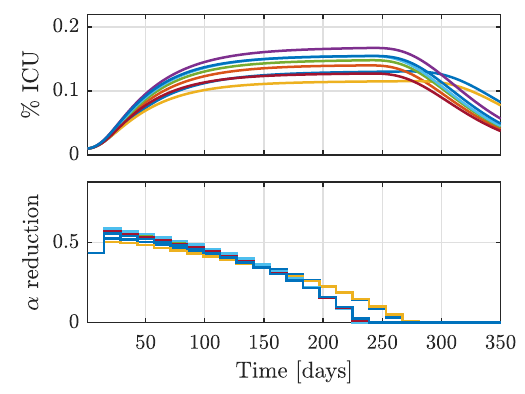}\vspace{-.3cm}   
    \caption{Recourse MPC constrained to operate within the safe set $X_\mathrm{f}$.}
    \label{fig:MPC_SAFE}
\end{figure}
 
% \begin{figure}[ht!]
%     \centering 
%     \includegraphics[scale = 0.35]{figures/MPC_safe_U.pdf}
%     \caption{Vanilla MPC algorithm control actions when  constrained to operate within a safe set $X_\mathrm{f}$. \Doma{Will redo image, just a placeholder}}
%     \label{fig:MPC_Xf_control}
% \end{figure} 
Comparing figure~\ref{fig:MPC_SAFE} to the bottom figures in \ref{fig:states}-\ref{fig:control} clearly shows conservativeness introduced by begin safe at all times. In the future, we plan to make use of this terminal set property to make our MPC scheme recursively feasible at all times,
while still allowing some states to leave this set with the goal of improving performance.

\section{Conclusion}
This paper presented a control framework for pandemic response that optimizes the level of non-pharmaceutical interventions (NPIs) by accounting for multiple possible pandemic evolutions.
To this end, we devised a compartmental model that strikes a %good trade-off 
balance between identifiability and level of detail, allowing us to explicitly account for hospital pressure% in the decision-making process
. We embedded this model into a scenario-based Model Predictive Control (MPC) scheme that optimizes decisions by considering different possible scenarios. This crucial feature allows our control algorithm to outperform standard MPC methods in terms of safety.
For our future work, we envision the development of a data-driven online estimation scheme to track the changes in model parameters, as well as the derivation of conditions for the recursive feasibility of the presented scenario-based MPC by using the developed invariant set.

%a few directions.First, a data-driven online estimation scheme should be developed for the model parameters, to be combined with the presented MPC scheme.
%Second, we aim to prescribe conditions which make the presented scenario-based MPC recursively feasible by making use of the developed invariant set.
% Second, we aim at developing a tailored solver for large scale problems of this type, as they are decomposable due to their scenario-based structure.
%Third, we believe that considering age stratification via a contact matrix will make the model more reliable.
% Finally,  we plan to extend our analysis to models where reinfections are possible, and account for vaccinations.

%\addtolength{\textheight}{-12cm}   % This command serves to balance the column lengths
                                  % on the last page of the document manually. It shortens
                                  % the textheight of the last page by a suitable amount.
                                  % This command does not take effect until the next page
                                  % so it should come on the page before the last. Make
                                  % sure that you do not shorten the textheight too much.

%%%%%%%%%%%%%%%%%%%%%%%%%%%%%%%%%%%%%%%%%%%%%%%%%%%%%%%%%%%%%%%%%%%%%%%%%%%%%%%%

%%%%%%%%%%%%%%%%%%%%%%%%%%%%%%%%%%%%%%%%%%%%%%%%%%%%%%%%%%%%%%%%%%%%%%%%%%%%%%%%

\section*{ACKNOWLEDGMENTS}
We thank Dr.\ I.\ New for proofreading this paper.

%%%%%%%%%%%%%%%%%%%%%%%%%%%%%%%%%%%%%%%%%%%%%%%%%%%%%%%%%%%%%%%%%%%%%%%%%%%%%%%%

\bibliographystyle{IEEEtran}
\bibliography{mppc} 

\end{document}